\providecommand{\U}[1]{\protect\rule{.1in}{.1in}}
\numberwithin{equation}{section}
\providecommand{\U}[1]{\protect\rule{.1in}{.1in}}
\newtheorem{theorem}{Theorem}
\newtheorem{assumption}[theorem]{Assumption}
\theoremstyle{definition}
\newtheorem{example}[theorem]{Example}
\newtheorem{lemma}[theorem]{Lemma}
\newtheorem{proposition}[theorem]{Proposition}
\newtheorem{remark}[theorem]{Remark}
\begin{document}
\title{Negative potentials and collapsing universes II}
\author{Roberto Giamb\`{o}}
\address{School of Science and Technology, Mathematics Division, University of
Camerino, 62032 Camerino (MC) Italy.}
\email{roberto.giambo@unicam.it}
\author{John Miritzis}
\email{imyr@aegean.gr}
\author{Koralia Tzanni}
\address{Department of Marine Sciences, University of the Aegean, University Hill,
Mytilene 81100, Greece.}
\email{tzanni@aegean.gr}

\begin{abstract}
Completing a previous analysis started in \cite{gmt}, we study flat
Friedmann--Lema\^{\i}tre--Robertson--Walker (FLRW) models with a perfect fluid
matter source and a scalar field nonminimally coupled to matter,
self--interacting with a potential that may attain negative values. We prove
that the evolution generically forces the Hubble function to diverge to
$-\infty$ in a finite time, except in case the potential exhibits a flat
plateau at infinity (tending to zero from below); in that case we find
conditions which may give rise to ever expanding or recollapsing cosmologies.

\end{abstract}
\maketitle

\section{Introduction}

An accelerating phase of the Universe requires scalar fields with non-negative
potentials playing the role of a cosmological term. However, in some
cosmological models potentials taking negative values are used, (see
\cite{ffkl} for motivations). In these cases several authors conclude that the
Universe eventually collapses even if it is flat, \cite{ffkl}-\cite{kowa}. In
a recent paper we rigorously proved that a general class of bounded from above
potentials with $\lim_{\phi\rightarrow-\infty}V\left(  \phi\right)  =-\infty$,
almost always forces the Hubble function $H$ to diverge to $-\infty$ in a
finite time, \cite{gmt}. This means that, up to a non generical choice of
initial data, an initially expanding Universe recollapses and develops a
singularity in a finite amount of time. The problem of collapsing scalar
fields cosmologies from the mathematical point of view has been sporadically
treated in the literature, e.g. \cite{giam1}-\cite{rzmm}. On the other hand,
non-negative potentials in FLRW models with mathematically rigorous results,
have been studied by several authors (see for example
\cite{foster,rend,gimi,ahu}), but to our knowledge there is not a
corresponding rigorous treatment of negative potentials, apart from the above
mentioned study in \cite{gmt}.

The class considered in \cite{gmt} contains potentials that fall to $-\infty$
as $\phi\rightarrow-\infty$, have a global positive maximum and go to zero
from above as $\phi\rightarrow+\infty$. The purpose of the present paper is to
complete the rigorous treatment of cosmological models with potentials taking
negative values. The remaining forms of negative potentials encountered in the
literature are listed as follows.

\begin{enumerate}
[A.]

\item \label{itm:A} Potentials having a negative minimum. Two important
examples include the ekpyrotic potentials and those used in models of cyclic
Universes; for reviews see Refs. \cite{linde2}, \cite{lehne2}.

\item \label{itm:B} Bounded from below potentials with no minimum. As an
example, we mention the potentials
\[
V\left(  \phi\right)  =V_{0}e^{-\lambda\phi}-C,\ \ \ V_{0},C,\lambda>0,
\]
which were considered in the context of supersymmetry theories, see for
example Ref. \cite{kali}.

\item \label{itm:C} Potentials with $V\left(  \phi\right)  $ decreasing from
$+\infty$ to $-\infty,$ for example
\[
V\left(  \phi\right)  =W_{0}-V_{0}\sinh\left(  \lambda\phi\right)
,\ \ \ \ \ \lambda,V_{0}>0,
\]
(see \cite{gma} where an exact solution was obtained in the absence of matter).

\item \label{itm:D} Potentials having a global positive maximum and
$\lim_{\phi\rightarrow\pm\infty}V\left(  \phi\right)  =-\infty.$ Near the
maximum, say at $\phi=0,$ they can be represented as%
\[
V\left(  \phi\right)  =V_{0}-\frac{m^{2}}{2}\phi^{2},
\]
cf. \cite{ffkl}. An example is the potential%
\[
V\left(  \phi\right)  =V_{0}\left(  2-\cosh\left(  \sqrt{2}\phi\right)
\right)  ,\ \ \ V_{0}>0,
\]
considered in \cite{kali}. Potentials of this class appear in cosmological
models in $N=2,4,8$ gauged supergravity \cite{linde1,klps1}. For detailed
cosmological implications see \cite{klps,kali,kali1}.
\end{enumerate}

From the mathematical point of view, potentials of type \ref{itm:A},
\ref{itm:B} and \ref{itm:C} cannot be studied using the techniques exploited
in \cite{gmt}; on the other hand, for potentials of type \ref{itm:D} similar
arguments as those used in \cite{gmt} apply almost straightforwardly.

In this paper we study the recollapse problem of scalar-field cosmological
models with negative potentials having the general features of the above
classes \ref{itm:A}--\ref{itm:D}. We consider a nonminimal coupling of the
scalar field to matter; the coupling coefficient is assumed to be an arbitrary
non-negative bounded function of the scalar field with non-negative limits as
$\phi\rightarrow\pm\infty$, see (\ref{eq:Q}) below. Under quite general
assumptions on the potential, initially expanding flat Universes are shown to
eventually recollapse, except in case where the potential exhibits a flat
plateau at infinity, tending to 0 from below. In this case, generical
evolution may also be eternally expanding for some open subsets in the
parameter space.

The plan of the paper is as follows. In the next section we write the field
equations for flat FLRW models as a constrained four-dimensional dynamical
system. We impose a number of assumptions so that our class of potentials
includes cases \ref{itm:A}, \ref{itm:B} and \ref{itm:C}. We state our main
theorem according to which, the Hubble function, $H$, almost always diverges
to $-\infty$ in a finite time. In section \ref{sec3} we analyse all possible
limit sets of the dynamical system and prove a number of propositions that
lead to the proof of the main theorem. Section \ref{sec4} contains a
proposition covering case \ref{itm:D}, and so we complete our analysis that is
discussed in the final section.

\section{Formulation of the problem and main theorem}

For homogeneous and isotropic flat spacetimes the field equations can be
written (see \cite{gmt}), as an autonomous dynamical system,
\begin{align}
\dot{\phi}  &  =y,\label{eq:1}\\
\dot{y}  &  =-3Hy-V^{\prime}(\phi)+\alpha\rho,\qquad\alpha=\frac{4-3\gamma}%
{2}Q(\phi),\label{eq:2}\\
\dot{\rho}  &  =-\rho(3\gamma H+\alpha y),\label{eq:3}\\
\dot{H}  &  =-\frac{1}{2}\left(  y^{2}+\gamma\rho\right)  , \label{eq:4}%
\end{align}
subject to the constraint,
\begin{equation}
3H^{2}=\frac{1}{2}y^{2}+V(\phi)+\rho. \label{eq:5}%
\end{equation}
In most quintessence models, the coupling coefficient, $Q$, is postulated to
be a positive constant, see for example \cite{amen1}; here $Q$ is assumed to
be a positive and bounded function of class $C^{1}$ such that,
\begin{equation}
Q_{\pm}:=\lim_{\phi\rightarrow\pm\infty}Q(\phi)>0. \label{eq:Q}%
\end{equation}
For motivation and other couplings see \cite{leon,tzmi}. We recall that the
constraint \eqref{eq:5} is invariant under the flow of
\eqref{eq:1}--\eqref{eq:4}. In the following we will consider solutions to
\eqref{eq:1}--\eqref{eq:4} which satisfy \eqref{eq:5} at some initial time,
and therefore \eqref{eq:5} holds throughout the evolution. We will refer to
this solution as solutions to \eqref{eq:1}--\eqref{eq:5}.

In the following we incorporate cases \ref{itm:A}--\ref{itm:C} into a large
class of potentials $V(\phi)\in\mathcal{C}^{2}$ satisfying some further
assumptions. To begin, let $u(\phi)$ be the function,
\begin{equation}
u(\phi)=\frac{V^{\prime}(\phi)}{V(\phi)}. \label{eq:u}%
\end{equation}

\begin{assumption}
\label{as:1} We assume that $V(\phi)\in\mathcal{C}^{2}$ is such that

\begin{enumerate}
\item $\lim_{\phi\to-\infty}V(\phi)=+\infty$,

\item There exists a unique $\phi_{0}\in\mathbb{R\,}:\,V(\phi_{0})=0$.
Moreover, $V$ is strictly decreasing for all $\phi\leq\phi_{0}$,

\item $\lim_{\phi\rightarrow\pm\infty}u(\phi)=\lambda_{\pm}\in\mathbb{R}$,

\item $\lim_{\phi\rightarrow+\infty}V(\phi)=V_{\infty}\leq0$ (possibly
$V_{\infty}=-\infty$).

\item \label{itm:5}There exists a $\mathcal{C}^{2}$--diffeomorphism,
$f:(-\infty,\phi_{0}]\rightarrow\lbrack0,s_{0})$, such that

\begin{enumerate}
\item \label{itm:a}The limit $\lim_{\phi\rightarrow-\infty}f^{\prime}(\phi
)$,$\ $exists and is equal to zero,

\item $\lim_{\phi\rightarrow-\infty}f(\phi)=0$,

\item $\lim_{\phi\rightarrow-\infty}\frac{u^{\prime}(\phi)}{f^{\prime}(\phi
)}=0$,

\item \label{itm:d} $\lim_{\phi\rightarrow-\infty}\frac{f^{\prime\prime}%
(\phi)}{f^{\prime}(\phi)}\in\mathbb{R\,}$.
\end{enumerate}

\item \label{itm:6} If $V_{\infty}=0$, then there exists a $\phi_{M}>0$, such
that $V$ is strictly increasing for $\phi\geq\phi_{M}$. Moreover, we make a
similar hypothesis to \eqref{itm:5} above for $\phi\rightarrow+\infty$,
assuming the existence of a $\mathcal{C}^{2}$--diffeomorphism, $g(\phi
):[\phi_{M},+\infty)\rightarrow(0,s_{0}]$ such that requests
\eqref{itm:a}--\eqref{itm:d} hold for $g(\phi)$, as $\phi\rightarrow+\infty$.
\end{enumerate}
\end{assumption}

Assumptions \eqref{itm:5}--(\ref{itm:6}) are required for situations where the
scalar field possibly diverges. In those cases, the above diffeomorphisms are
needed to bring a neighbourhood of infinity to a neighbourhood of the origin,
\cite{foster,giam}.

It is easy to verify that cases \ref{itm:A}, \ref{itm:B} and \ref{itm:C} of
negative potentials mentioned in the Introduction satisfy Assumption
\ref{as:1}. In the next section we will examine the possible $\omega-$limit
sets of the system, essentially depending on the asymptotic behavior of the
scalar field $\phi(t)$. We will see that, except one case described in
Proposition \ref{thm:propV0}, solutions to \eqref{eq:1}--\eqref{eq:5} always
recollapse to a singularity in a finite amount of time. In particular, the
results proved in the next section may be collected in the following main theorem.

\begin{theorem}
\label{thm:main} Let $V(\phi)$ satisfy Assumption \ref{as:1}. Then, if either
$V_{\infty}<0$, or condition,
\begin{equation}
0<\gamma<\frac{4}{3},\quad0<Q_{+}<\sqrt{6}\frac{2-\gamma}{4-3\gamma}%
,\quad\lambda_{+}<-\frac{4-3\gamma}{2}Q_{+}-\frac{3(2-\gamma)\gamma
}{(4-3\gamma)Q_{+}}\label{eq:cond}%
\end{equation}
does \textbf{not} hold, then a solution to \eqref{as:1}--\eqref{eq:5}, up to a
zero--measured set of initial data, recollapses to a singularity in a finite
amount of time, i.e.,
\begin{equation}
\exists t_{\ast}>0:\,\lim_{t\rightarrow t_{\ast}^{-}}H(t)=-\infty
.\label{eq:rec}%
\end{equation}
Otherwise, if $V_{\infty}=0$ and \eqref{eq:cond} does hold, a solution to
\eqref{eq:1}--\eqref{eq:5} either generically recollapses to a singularity in
a finite time or expands forever, with $\phi(t)\rightarrow+\infty$ and
$y(t),\rho(t)$ and $H(t)$ infinitesimal as $t\rightarrow+\infty$.
\end{theorem}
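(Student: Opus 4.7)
The starting point is the monotonicity of $H$: from \eqref{eq:4}, $\dot H \le 0$ with equality only when $y=\rho=0$. The dichotomy is between $H$ being unbounded below (in which case one still has to upgrade divergence to finite-time blow-up, hence \eqref{eq:rec}) and $H$ admitting a finite limit $H_\infty$ as $t\to +\infty$. In the latter case, integrating \eqref{eq:4} gives $y^2+\gamma\rho\in L^1([0,+\infty))$, so there exists a sequence $t_n\to +\infty$ along which $y(t_n),\rho(t_n)\to 0$; the constraint \eqref{eq:5} then forces $V(\phi(t_n))\to 3H_\infty^2\ge 0$. The bulk of the work is to rule out this second alternative, except in the regime identified by \eqref{eq:cond}, and for that I would split on the asymptotic behaviour of $\phi$.

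If $\phi$ stays bounded, compactness makes the $\omega$-limit set collapse onto equilibria of \eqref{eq:1}--\eqref{eq:4} with $y=\rho=0$ and $V'(\phi_*)=0$; Assumption \ref{as:1}(2) forbids such critical points on $(-\infty,\phi_0]$, and when $V_\infty=0$ Assumption \ref{as:1}\eqref{itm:6} leaves at most a maximum of $V$ between $\phi_0$ and $\phi_M$, which is a hyperbolic saddle on the constraint surface and is therefore approached only by a zero-measure set of orbits. If $\phi(t_n)\to -\infty$, Assumption \ref{as:1}(1) yields $V(\phi(t_n))\to +\infty$, contradicting $V(\phi(t_n))\to 3H_\infty^2<+\infty$, so this case never occurs. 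If $\phi(t_n)\to +\infty$, then $V(\phi(t_n))\to V_\infty$ forces $V_\infty=0$ and $H_\infty=0$, so this branch is ruled out whenever $V_\infty<0$.

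The remaining case ($\phi\to +\infty$, $V_\infty=0$) I would handle by compactifying with the diffeomorphism $g$ from Assumption \ref{as:1}\eqref{itm:6}: its four conditions \eqref{itm:a}--\eqref{itm:d} are exactly what allows the flow to be extended to a $C^1$ vector field on the boundary $\{g(\phi)=0\}$, where $u(\phi)\to \lambda_+$ and the system is asymptotically autonomous. One then searches for a scaling-type equilibrium with $y\propto H$, $\rho\propto H^2$, $\phi\to +\infty$; the requirement that this fixed point lie in the physical region $\{\rho\ge 0\}$ and be linearly attracting transverse to the constraint \eqref{eq:5} produces precisely the inequalities in \eqref{eq:cond}. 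When \eqref{eq:cond} fails, the fixed point is either unphysical or a saddle with codimension-one stable manifold, so again only a zero-measure set of orbits can follow it and generic orbits must recollapse; when \eqref{eq:cond} holds, the fixed point is an attractor on an open set and yields the eternally expanding branch with $y,\rho,H\to 0$ described at the end of the theorem.

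The last step is to sharpen $H\to -\infty$ into finite-time blow-up. Substituting $\rho=3H^2-\tfrac12 y^2-V(\phi)$ into \eqref{eq:4} gives $\dot H\le -\tfrac{3\gamma}{2}H^2+\tfrac{\gamma}{2}V(\phi)$, and along any orbit that does not eventually follow the $\phi\to +\infty$ scenario just discussed, $V(\phi)$ remains bounded above; a Riccati comparison then yields blow-up in finite time. The hardest part of the whole argument is the ``up to a zero-measure set'' clause: it rests on showing that the stable manifolds of the several saddle equilibria (the interior maximum of $V$ when present, and the boundary fixed point at $\phi=+\infty$ when \eqref{eq:cond} fails) all have positive codimension on the constraint hypersurface, and this is where the smoothness conditions packaged in Assumption \ref{as:1}\eqref{itm:5}--\eqref{itm:6} enter in an essential way.
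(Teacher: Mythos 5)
Your overall architecture (split on the asymptotic behaviour of $\phi$, compactify at $\phi=\pm\infty$ with $f$ and $g$, identify the scaling point whose attractivity gives \eqref{eq:cond}, handle genericity via positive-codimension stable manifolds) matches the paper's, but there is a genuine gap in the step that upgrades $H\to-\infty$ to finite-time blow-up. You claim that ``along any orbit that does not eventually follow the $\phi\to+\infty$ scenario, $V(\phi)$ remains bounded above,'' and then invoke the Riccati comparison $\dot H\le -\tfrac{3\gamma}{2}H^2+\tfrac{\gamma}{2}V(\phi)$. This is false when $\phi(t)\to-\infty$ (or $\liminf\phi=-\infty$): Assumption \ref{as:1}(1) gives $V(\phi(t))\to+\infty$, and since the constraint \eqref{eq:5} forces $V\le 3H^2$, your differential inequality degenerates to $\dot H\le \tfrac{3\gamma}{2}(V-3H^2)/1\le 0$, which yields nothing. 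This regime is not exotic — it is exactly case \ref{itm:C} and the reason the paper needs Proposition \ref{thm:prop2}: one passes to the normalized variables $w=y/H$, $z=\sqrt{\rho}/H$, $s=f(\phi)$, shows that the admissible $\omega$-limit points on $\{s=0\}$ all satisfy $\tfrac12 w^2+z^2\ge\delta>0$ (so that generically $-\dot H/H^2=\tfrac12(w^2+\gamma z^2)\ge\delta$ eventually), and integrates $1/H$ to get blow-up. Your sketch contains no substitute for this.

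Two further, smaller omissions. First, your trichotomy ``$\phi$ bounded / $\phi(t_n)\to-\infty$ / $\phi(t_n)\to+\infty$'' is not exhaustive: $\phi$ may oscillate without a limit while remaining unbounded, and the paper needs a separate argument (Proposition \ref{thm:phinotexists}), combining the Cauchy--Schwarz lower bound $s_n-t_n\ge(\phi(t_n)-\phi(s_n))^2/(2(H(t_n)-H(s_n)))$ on the traversal times of the region where $V<\bar V<0$ with a Riccati comparison, to exclude $H_\infty\in\mathbb{R}$ there. Second, in the case $\phi\to+\infty$, $V_\infty=0$ with \eqref{eq:cond} failing, orbits of the compactified system need not approach any fixed point at all — the physical region $\tfrac12 w^2+z^2>3$ is unbounded and the solution may escape to infinity in the $(w,z)$ variables; the paper rules out that such orbits are ever-expanding by the asymptotic analysis of $\widetilde{x}=\sqrt{|V|}/H$, concluding $\sup\mathbb{I}<\infty$ would follow, hence $H<0$ eventually and Lemma \ref{lem:2} applies. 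Your ``saddle with codimension-one stable manifold'' dichotomy does not cover these escaping orbits.
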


\section{Qualitative behavior of the solution\label{sec3}}

Throughout this section we will suppose that $V(\phi)$ satisfies the set of
hypotheses collected in Assumption \ref{as:1}. We will call $\phi_{\infty}%
\in\mathbb{R}\cup\{\pm\infty\}$ the limit value of $\phi(t)$, if it exists,
i.e. $\lim_{t\rightarrow\sup\mathbb{I}}\phi(t)=\phi_{\infty}$, where
$\mathbb{I}$ is the maximal interval of definition of a solution to
\eqref{eq:1}--\eqref{eq:5}. Some of the proofs in this section rely on two
lemmas proved in \cite{gmt}; for the convenience of the reader we reproduce
them here.

\begin{lemma}
\label{lem:1}Let $\gamma(t)=\left(  \phi(t),y(t),\rho(t),H(t)\right)  $ be a
bounded solution such that $\rho(t_{0})>0$. Then $\gamma(t)\in W^{s}(q_{\pm}%
)$, where $W^{s}(q)$ is the stable manifold of an equilibrium point $q$.
\end{lemma}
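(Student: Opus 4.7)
The natural strategy is to exploit $H$ as a Lyapunov function: from \eqref{eq:4}, $\dot H = -\tfrac{1}{2}(y^2 + \gamma\rho) \le 0$, so $H$ is monotone non-increasing along every orbit. For the bounded orbit $\gamma(t)$, $H(t)$ is therefore bounded and must converge to some $H_\infty \in \mathbb{R}$, which in turn yields
\[
\int_{t_0}^{+\infty}\bigl(y(t)^2 + \gamma\rho(t)\bigr)\,dt = 2\bigl(H(t_0) - H_\infty\bigr) < +\infty.
\]

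The next step is to upgrade this $L^1$ estimate to pointwise decay $y(t)\to 0$ and $\rho(t)\to 0$. Because $\gamma(t)$ is bounded, equations \eqref{eq:1}--\eqref{eq:4} provide uniform bounds on $\dot y$ and $\dot\rho$, so $y^2$ and $\rho$ are uniformly continuous on $[t_0,+\infty)$; a Barbalat-type argument then yields the claimed pointwise convergence.

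With these convergences in hand, LaSalle's invariance principle applies on the non-empty, compact, connected $\omega$-limit set $\Omega(\gamma)$. By the previous step $\Omega(\gamma) \subset \{y=0,\,\rho=0\}$, and positive invariance under \eqref{eq:2} forces $V'(\phi)\equiv 0$ on $\Omega(\gamma)$; the constraint \eqref{eq:5} then gives $H = \pm\sqrt{V(\phi)/3}$, which in particular requires $V(\phi)\ge 0$. Hence $\Omega(\gamma)$ consists only of equilibria of the form $q = (\phi_*,0,0,\pm\sqrt{V(\phi_*)/3})$ with $V'(\phi_*)=0$. Under Assumption~\ref{as:1} such equilibria form a discrete set, so connectedness of $\Omega(\gamma)$ forces it to reduce to a single point $q$. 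This gives $\gamma(t)\to q$, i.e.\ $\gamma(t)\in W^s(q)$, with the subscript $\pm$ selected by the sign of $H_\infty$.

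The chief obstacle is the passage from connectedness of $\Omega(\gamma)$ to convergence to a single equilibrium, which requires that the critical points of $V$ with $V\ge 0$ be isolated; this ultimately rests on the structural hypotheses in Assumption~\ref{as:1} (strict monotonicity of $V$ for $\phi\le\phi_0$ and the asymptotic conditions on $u$ and the diffeomorphisms $f,g$) to rule out continua of critical points at admissible heights. A secondary but essential point is the role of $\rho(t_0)>0$: integrating \eqref{eq:3} gives $\rho(t) = \rho(t_0)\exp\!\bigl(-\!\int_{t_0}^{t}(3\gamma H + \alpha y)\,ds\bigr) > 0$ for all $t$, so $\gamma(t)$ never coincides with an equilibrium (which sits on $\{\rho=0\}$) and genuinely lies on the stable manifold of $q$.
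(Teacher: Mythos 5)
Your proposal follows essentially the route of the paper's proof: note that the paper does not reprove this lemma but quotes it from \cite{gmt}, where the argument is precisely the monotone--$H$/invariance-principle scheme you describe (boundedness plus $\dot H=-\tfrac12(y^2+\gamma\rho)\le 0$ give $H\to H_\infty$ and $\int(y^2+\gamma\rho)\,dt<\infty$; Barbalat gives $y,\rho\to0$; invariance of the $\omega$-limit set forces $V'(\phi)=0$ and, via the constraint, $3H_\infty^2=V(\phi)$ there). The one step you should repair is the reduction of $\Omega(\gamma)$ to a single point. Assumption \ref{as:1} does \emph{not} make the critical points of $V$ a discrete set: strict monotonicity of $V$ on $(-\infty,\phi_0]$ is compatible with $V'$ vanishing on, say, a Cantor set, so ``discreteness plus connectedness'' is not available as stated. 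The correct observation is that $H\equiv H_\infty$ on $\Omega(\gamma)$, so the constraint pins $V(\phi)\equiv 3H_\infty^2\ge 0$ there; since by Assumption \ref{as:1} the set $\{V\ge0\}$ is $(-\infty,\phi_0]$, on which $V$ is strictly decreasing and hence injective, there is at most one admissible value of $\phi$, and $\Omega(\gamma)$ is automatically a single equilibrium $q_\pm$ (connectedness is not even needed). With that substitution your proof is complete, and the hypothesis $\rho(t_0)>0$ plays exactly the role you assign to it, keeping the orbit off the equilibrium set $\{\rho=0\}$.
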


It can be shown that the dimension of the stable manifold is always less than
the dimension of the phase space, \cite{gmt}. Therefore, the meaning of the
above lemma is that, \emph{future bounded trajectories of the system are not
generic}.

\begin{lemma}
\label{lem:2} Let $\gamma(t)$ be a solution to the system
\eqref{eq:1}--\eqref{eq:5}. If there exists $t_{1}\geq t_{0}$ and $\bar{V}%
\in\mathbb{R}$ such that, for all $t\geq t_{1}$, $V(\phi(t))\leq\bar{V}$, and
either $(\imath)$ $\bar{V}<0$, or $(\imath\imath)$ $H(t_{1})<-\sqrt{\bar{V}%
/3}$, then $H(t)$ negatively diverges in a finite time, i.e. (\ref{eq:rec}) holds.
\end{lemma}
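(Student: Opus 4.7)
The plan is to combine the constraint \eqref{eq:5} with the Raychaudhuri equation \eqref{eq:4} to get a differential inequality for $H$ alone. From \eqref{eq:5} we have $\tfrac{1}{2}y^{2}+\rho=3H^{2}-V(\phi)$, and since physically $\gamma\leq 2$ we obtain the pointwise bound $y^{2}+\gamma\rho\geq \gamma(\tfrac{1}{2}y^{2}+\rho)$. Substituting into \eqref{eq:4} and using $V(\phi(t))\leq\bar V$ for $t\geq t_{1}$ yields
\begin{equation*}
\dot H(t)\ \leq\ -\frac{\gamma}{2}\bigl(3H^{2}(t)-V(\phi(t))\bigr)\ \leq\ -\frac{\gamma}{2}\bigl(3H^{2}(t)-\bar V\bigr),\qquad t\geq t_{1}.
\end{equation*}
This is the master inequality driving both cases.

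For case $(\imath)$, with $\bar V<0$, the right-hand side is bounded above by the strictly negative constant $-\tfrac{\gamma|\bar V|}{2}$, so $H$ decays at least linearly and, in particular, becomes strictly negative in finite time. From that moment on I would drop the constant term and use the weaker but autonomous estimate $\dot H\leq -\tfrac{3\gamma}{2}H^{2}$. Rewriting this as $\frac{d}{dt}\bigl(1/H\bigr)\geq \tfrac{3\gamma}{2}$ and integrating from a time where $1/H<0$, one sees that $1/H$ must reach $0$ in finite time, which forces $H\to -\infty$ in finite time and gives \eqref{eq:rec}.

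For case $(\imath\imath)$, with $\bar V\geq 0$ (otherwise the case reduces to $(\imath)$) and $H(t_{1})<-\sqrt{\bar V/3}$, I would use that $\dot H\leq 0$ in general, so $H$ is non-increasing and thus $H(t)\leq H(t_{1})<-\sqrt{\bar V/3}$ for all $t\geq t_{1}$. Consequently $3H^{2}(t)-\bar V\geq \delta:=3H^{2}(t_{1})-\bar V>0$, and the master inequality again gives a uniform negative upper bound $\dot H\leq -\gamma\delta/2$. Thus $H$ becomes arbitrarily negative in finite time, after which the same Riccati argument as in $(\imath)$ applies and produces the blow-up.

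The only real subtlety is the need to split the argument into a linear phase (to leave any neighborhood of $H=0$ or of $-\sqrt{\bar V/3}$) and a nonlinear Riccati phase (to produce finite-time blow-up); none of the steps are delicate once the master inequality is in hand, so I do not expect any serious obstacle beyond keeping track of signs and of the condition $\gamma\leq 2$, which is standard.
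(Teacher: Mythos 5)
Your proof is correct and takes essentially the same route as the paper: the paper only reproduces this lemma from \cite{gmt} without reproving it, but the key differential inequality you derive, $\dot H\le\frac{\gamma}{2}(-3H^{2}+\bar V)$ (valid for $0<\gamma\le 2$), together with comparison against the Riccati equation $\dot z=\frac{\gamma}{2}(-3z^{2}+\bar V)$, is exactly the mechanism the authors invoke again in the proof of Proposition \ref{thm:phinotexists}. Your two-phase integration (a linear decay phase to drive $H$ below the relevant threshold, then the $\dot H\lesssim -H^{2}$ blow-up phase) is just an explicit hand-integration of that comparison solution, so the arguments coincide in substance.
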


We are now ready to examine different situations depending on $\phi_{\infty}$.

\subsection{Case $\phi_{\infty}=-\infty$}

We firstly analyse the case \ref{itm:C} when the scalar field negatively
diverges in such a way that $V(\phi(t))\rightarrow+\infty$, see for example
\cite{gma}. We use expansion normalized variables techniques, first introduced
in \cite{clw}; see also \cite{foster,giam}. Notice that, in the present paper
we do not need to assume an a priori estimate on $\lambda_{\pm}$, unlike for
instance in \cite[eq.(2.2)]{giam}.

\begin{proposition}
\label{thm:prop2} If $\phi_{\infty}=-\infty$ then \eqref{eq:rec} generically holds.
\end{proposition}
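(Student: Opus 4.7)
The plan is to combine an elementary Friedmann-constraint observation with the expansion-normalised variables technique of \cite{clw,foster,giam}. First, by Assumption \ref{as:1}(1), along any solution with $\phi(t)\to-\infty$ we have $V(\phi(t))\to+\infty$, so \eqref{eq:5} forces $H(t)^{2}\to+\infty$. Since \eqref{eq:4} gives $\dot H\le 0$, the Hubble function is non-increasing, hence must diverge to $-\infty$. What remains is to rule out the possibility that this happens only at $t=+\infty$, and to show that such ``slow'' divergence is non-generic.

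Next, for $t$ large enough that $H<0$, I would introduce the compactifying variables
\[
X=\frac{y}{\sqrt{6}\,(-H)},\qquad Y=\frac{\sqrt{V(\phi)}}{\sqrt{3}\,(-H)},\qquad \Omega=\frac{\rho}{3H^{2}},\qquad s=f(\phi),
\]
together with the rescaled time $d\tau=(-H)\,dt$. The constraint \eqref{eq:5} turns into $X^{2}+Y^{2}+\Omega=1$, so the state lives in the compact set $\{X^{2}+Y^{2}+\Omega=1\}\times[0,s_{0})$. Using Assumption \ref{as:1}(5)(a)--(d) one rewrites the $\phi$-dependence of the vector field in terms of $s$, $u(\phi)\to\lambda_{-}$ and $Q(\phi)\to Q_{-}$, obtaining a $\mathcal C^{1}$ dynamical system that extends smoothly to the invariant boundary $s=0$. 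On $s=0$ the dynamics coincides with the standard scalar-field--fluid system driven by an exponential potential of slope $\lambda_{-}$, whose equilibria and their stability are classical.

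The core step is then to read off from the deceleration quantity
\[
q:=-\frac{\dot H}{H^{2}}=3X^{2}+\tfrac{3\gamma}{2}\Omega
\]
whether the divergence of $H$ is finite- or infinite-time. Indeed, if the $\omega$-limit in $(X,\Omega)$ satisfies $q_{\ast}>0$, then $d(1/H)/dt\to q_{\ast}$, so $1/H\to 0^{-}$ at finite rate and \eqref{eq:rec} follows; whereas $q_{\ast}=0$ would postpone the blow-up to $t=+\infty$. But $q_{\ast}=0$ forces $X=\Omega=0$ and hence $Y^{2}=1$, i.e.\ potential-dominated equilibria on the boundary $s=0$. I would linearise the compactified system at each such point and show, using Assumption \ref{as:1}(5)(c)--(d) to compute the $s$-direction entries of the Jacobian together with $\lambda_{-}\le 0$ (which follows from Assumption \ref{as:1}(2)), that their stable manifolds have strictly positive codimension. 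A stable-manifold argument in the spirit of Lemma \ref{lem:1} then shows that the initial data whose forward orbit is asymptotic to the union of these manifolds form a Lebesgue-null set, whence \eqref{eq:rec} holds generically.

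The step I expect to be most delicate is the bookkeeping of the smooth extension of the transformed equations to $s=0$ together with the unstable-direction count at the potential-dominated fixed points: one must handle the quotients $u'(\phi)/f'(\phi)$ and $f''(\phi)/f'(\phi)$ appearing in the Jacobian through the limit conditions in Assumption \ref{as:1}(5), and verify that at least one of the transverse eigenvalues (typically the $Y$- or $s$-direction) is indeed positive, so that the stable manifold is a proper submanifold.
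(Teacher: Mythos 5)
Your proposal takes essentially the same route as the paper's proof: the same preliminary observation that $V\to+\infty$ forces $H\to-\infty$, the same expansion-normalised variables with time rescaling $\mathrm{d}\tau=-H\,\mathrm{d}t$ and compactification $s=f(\phi)$ (your $(X,Y,\Omega)$ are just a rescaled version of the paper's $(w,z)$ with $x=1/H$ eliminated through the constraint), the same finite-time blow-up criterion via $\mathrm{d}(1/H)/\mathrm{d}t=q\geq\delta$, and the same genericity argument ruling out convergence to the potential-dominated point $(w,z)=(0,0)$, which the paper handles by noting it is a critical point only when $\lambda_{-}=0$ and is then unstable with eigenvalues $\{0,3,\tfrac{3}{2}\gamma\}$. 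No substantive difference.
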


\begin{proof}
Since $V(\phi(t))\rightarrow+\infty$, then $H(t)^{2}\rightarrow+\infty$, and
since $H(t)$ is decreasing we conclude that $H\left(  t\right)  \ $negatively
diverges, so we have to prove that this happens in a finite amount of time.
Without loss of genericity suppose $H(0)<0$. We introduce the variables,
\[
\phi,\qquad x=\frac{1}{H},\qquad w=\frac{y}{H},\qquad z=\frac{\sqrt{\rho}}%
{H},
\]
and a new time coordinate $\tau$, defined by $\mathrm{d}\tau/\mathrm{d}t=-H$,
as done in \cite{giam}. Using the constraint
\[
V(\phi)x^{2}+\frac{1}{2}w^{2}+z^{2}=3,
\]
to eliminate $x(\tau)$, we come to the following system for the triple
$(\phi(\tau),w(\tau),z(\tau))$:
\begin{align}
\frac{\mathrm{d}\phi}{\mathrm{d}\tau}  &  =-w,\label{eq:A1}\\
\frac{\mathrm{d}w}{\mathrm{d}\tau}  &  =-\left(  \frac{1}{2}w^{2}-3\right)
(w+u(\phi))-z^{2}\left(  \frac{\gamma}{2}w+\alpha(\phi)+u(\phi)\right)
,\label{eq:A2}\\
\frac{\mathrm{d}z}{\mathrm{d}\tau}  &  =-\frac{1}{2}z\left[  w^{2}-\alpha
w+\gamma(z^{2}-3)\right]  , \label{eq:A3}%
\end{align}
where $u(\phi)=V^{\prime}(\phi)/V\left(  \phi\right)  $ was already defined.
We recall that we are interested in the dynamics near the critical point
\textquotedblleft at infinity\textquotedblright,\ $\phi\rightarrow-\infty$.
Therefore, we introduce the variable $s=f(\phi)$, where $f$ is defined in
Assumption \ref{as:1}. In this way we obtain a system in the variables
$(w(\tau),z(\tau),s(\tau))$, ruled by equations \eqref{eq:A2}, \eqref{eq:A3}
and
\begin{equation}
\frac{\mathrm{d}s}{\mathrm{d}\tau}=-wf^{\prime-1}(s)). \label{eq:A4}%
\end{equation}
Remembering that, $V(\phi(t))$ is eventually positive, we are interested in
solutions to \eqref{eq:A2}--\eqref{eq:A4} such that,
\begin{equation}
\frac{1}{2}w^{2}+z^{2}<3. \label{eq:A5}%
\end{equation}
We consider critical points of \eqref{eq:A2}--\eqref{eq:A4} such that $s=0$,
which are candidates to be $\omega$--limit points for the solutions we are
interested in. We may further restrict ourselves to critical points with
$w\geq0$, since we expect both $y$ and $H$ to be eventually negative and
$z\leq0$. The $(w,z)$--coordinates of the admissible critical points are then
(setting $\lambda=\lambda_{-}$ and $\alpha=\alpha(f^{-1}(0))=\tfrac{4-3\gamma
}{2}Q_{-}$)
\begin{align*}
\mathcal{A}  &  =\left(  \sqrt{6},0\right)  ,\,\mathcal{B}=(-\lambda
,0),\,\mathcal{C}=\left(  \frac{2\alpha}{2-\gamma},\frac{\sqrt{-2\alpha
^{2}+3(2-\gamma)^{2}}}{\gamma-2}\right)  ,\,\\
\mathcal{D}  &  =\left(  -\frac{3\gamma}{\alpha+\lambda},\frac{\sqrt
{3(-3\gamma+\alpha\lambda+\lambda^{2})}}{\alpha+\lambda}\right)  .
\end{align*}
It is easy to check that all these points, except possibly $\mathcal{B}$, do
not coincide with the origin $(0,0)$. In the particular case when $\lambda=0$,
then $\mathcal{B}=(0,0)$, but the eigenvalues of the linearised system
associated with this critical point are $\{0,3,\frac{3}{2}\gamma\}$ and so
this point is definitely an unstable equilibrium.

The generical situation therefore is that there exists a $\delta
>0\,:\,\frac{1}{2}w^{2}+z^{2}\geq\delta$ eventually. Then, recalling
\eqref{eq:4},
\[
\frac{1}{H(t)}-\frac{1}{H(t_{0})}=\int_{t_{0}}^{t}-\frac{\dot{H}(\sigma
)}{H(\sigma)^{2}}\,\mathrm{d}\sigma=\int_{t_{0}}^{t}\frac{1}{2}(w^{2}+\gamma
z^{2})\,\mathrm{d}\sigma\geq\delta(t-t_{0}).
\]
Since $H(t_{0})<0$, we conclude that $H(t)$ diverges in a finite amount of time.
\end{proof}

\begin{remark}
\label{rem:Vdiv} The same dynamics described in the above proposition applies
to the more general case $\liminf_{t\rightarrow\sup\mathbb{I}}V(\phi
(t))=-\infty$. Indeed, this situation implies again that $H(t)\rightarrow
-\infty$, and in the above proposition we have proved that the dynamics near
the point \textquotedblleft at infinity\textquotedblright\ $\phi
\rightarrow-\infty$, give necessarily rise to solutions that completely
recollapse in a finite time.
\end{remark}

\subsection{Case $\phi_{\infty}\in\mathbb{R}$.}

We briefly examine what happens if the scalar field converges to a positive value.

\begin{proposition}
\label{thm:propphifin} If $\phi_{\infty}\in\mathbb{R}$ then \eqref{eq:rec}
generically holds.
\end{proposition}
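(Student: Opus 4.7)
My plan is to dichotomize the behaviour of the full trajectory $\gamma(t)=(\phi(t),y(t),\rho(t),H(t))$ on its maximal interval $\mathbb{I}$, according to whether it is bounded or not. If $\gamma$ is bounded, then (generically $\rho(t_{0})>0$) Lemma~\ref{lem:1} places $\gamma(t)$ on the stable manifold of an equilibrium $q_{\pm}$; since that manifold has dimension strictly smaller than that of the phase space, this accounts only for a zero--measure set of initial data and is therefore non--generic.

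If instead $\gamma$ is unbounded, the key remark is that $\phi(t)\to\phi_{\infty}\in\mathbb{R}$ forces $V(\phi(t))$ to be bounded, while the constraint~(\ref{eq:5}) can be rewritten as
\begin{equation*}
\tfrac{1}{2}y(t)^{2}+\rho(t)=3H(t)^{2}-V(\phi(t)).
\end{equation*}
Combined with $\rho\ge 0$, this bounds both $y^{2}$ and $\rho$ whenever $H$ is bounded; hence the unboundedness must come from $H$ itself. Because $H$ is monotone decreasing by~(\ref{eq:4}), we must have $H(t)\to-\infty$ as $t\to\sup\mathbb{I}$.

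The final step is to promote this divergence to a finite--time blow--up, which is exactly what Lemma~\ref{lem:2} delivers. By continuity of $V$ there exists $t_{0}$ such that $V(\phi(t))\le\bar V$ for all $t\ge t_{0}$, where I choose any $\bar V>V(\phi_{\infty})$. If $V(\phi_{\infty})<0$ I take $\bar V<0$ and invoke Lemma~\ref{lem:2}(\textit{i}); otherwise, exploiting $H(t)\to-\infty$, I pick $t_{1}\ge t_{0}$ with $H(t_{1})<-\sqrt{\bar V/3}$ and invoke Lemma~\ref{lem:2}(\textit{ii}). Either option yields~(\ref{eq:rec}).

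The main obstacle I anticipate is the first step in the unbounded regime: one must rule out the scenario in which $\gamma$ is unbounded solely through $y$ or $\rho$ while $H$ stays tame. The constraint, together with $\rho\ge 0$ and the finite value of $V(\phi_{\infty})$, closes this possibility cleanly, after which Lemma~\ref{lem:2} acts essentially as a black box converting ``$H$ eventually very negative'' into a genuine finite--time blow--up.
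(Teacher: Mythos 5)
Your proof is correct and follows essentially the same route as the paper: Lemma~\ref{lem:1} rules out bounded trajectories generically, the constraint \eqref{eq:5} together with the boundedness of $V(\phi(t))$ forces the unboundedness onto the monotone $H$, and Lemma~\ref{lem:2} converts $H\to-\infty$ into finite--time blow--up. You are merely more explicit than the paper about how $\bar V$ and $t_1$ are chosen when invoking Lemma~\ref{lem:2}, which is a welcome clarification but not a different argument.
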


\begin{proof}
If $\phi_{\infty}\in\mathbb{R}$ then $\phi(t)$ is bounded. Then, if $H(t)$ was
bounded too, by \eqref{eq:5} also $y(t),\rho(t)$ would be bounded so the
solution would be bounded which by Lemma \ref{lem:1} is a non generic
situation. Then $H(t)$ is unbounded and since it is decreasing by
\eqref{eq:4}, it must diverge to $-\infty$. At this point, Lemma \ref{lem:2}
applies to give the result.
\end{proof}

\subsection{Case $\phi_{\infty}=+\infty$}

In this situation we must split the argument into two subcases, depending on
the value of $V_{\infty}$. We start by considering shortly the case when this
limit value is strictly negative (case \ref{itm:C}), possibly $-\infty$ (case
\ref{itm:D}).

\begin{proposition}
\label{thm:propphiinftyVneg} If $V_{\infty}<0$ and $\phi_{\infty}=+\infty$
then \eqref{eq:rec} generically holds.
\end{proposition}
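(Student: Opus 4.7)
The plan is to reduce the statement directly to Lemma \ref{lem:2}. Under the hypotheses $\phi_\infty = +\infty$ and $V_\infty < 0$ (possibly $-\infty$), Assumption \ref{as:1}(4) gives $V(\phi(t)) \to V_\infty$ along the trajectory, and in either subcase this limit is strictly negative, hence $V(\phi(t))$ is eventually uniformly bounded above by some strictly negative constant.

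More concretely, first I would pick the bound $\bar V$. If $V_\infty$ is a finite negative number, set $\bar V := V_\infty/2 < 0$; if instead $V_\infty = -\infty$, any strictly negative $\bar V$ works, e.g.\ $\bar V = -1$. In both cases, using that $V(\phi(t)) \to V_\infty$ as $t \to \sup \mathbb{I}$, there exists $t_1 \geq t_0$ such that $V(\phi(t)) \leq \bar V$ for all $t \geq t_1$.

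Next I would invoke Lemma \ref{lem:2}, case $(\imath)$, with this $\bar V$ and $t_1$: the strict negativity of the upper bound fulfills its hypothesis, and the lemma's conclusion is exactly \eqref{eq:rec}. Note that no genericity assumption on the initial data is actually needed here, unlike in Propositions \ref{thm:prop2} and \ref{thm:propphifin}: the result holds for \emph{every} solution satisfying the hypotheses, since case $(\imath)$ of Lemma \ref{lem:2} is unconditional on $H(t_1)$.

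I do not foresee any real obstacle; the argument is essentially a one-step reduction to the previously established Lemma \ref{lem:2}. The only mild care needed is to treat the two possibilities $V_\infty \in (-\infty,0)$ and $V_\infty = -\infty$ in a unified way, which is achieved by the single choice of $\bar V$ above. Any delicacy of the problem (such as the normalized-variable analysis carried out in Proposition \ref{thm:prop2}, or the parameter restrictions appearing in \eqref{eq:cond}) belongs to the companion case $V_\infty = 0$, not to the present one.
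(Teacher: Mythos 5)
Your proposal is correct and coincides with the paper's own argument: both deduce from $\phi_\infty=+\infty$ and $V_\infty<0$ that $V(\phi(t))$ is eventually bounded above by a strictly negative constant $\bar V$, and then invoke case $(\imath)$ of Lemma \ref{lem:2}. Your added remark that the conclusion is in fact unconditional (not merely generic) in this case is consistent with the lemma and is a fair observation.
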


\begin{proof}
If $\phi_{\infty}=+\infty$ then, since $V_{\infty}<0$, there exists a $\bar
{V}<0$, such that, $V(\phi(t))\leq\bar{V}<0$ eventually and then Lemma
\ref{lem:2} applies again to give the result.
\end{proof}

A more subtle case happens when $V_{\infty}=0$, as is the case of the
ekpyrotic potentials. In this situation the critical point \textquotedblleft
at infinity\textquotedblright\ corresponding to $\phi\rightarrow+\infty$ must
be studied carefully, since it may give rise to ever expanding cosmologies.
Before we state the precise theorem, the following preliminary result is needed.

\begin{lemma}
\label{thm:lem} If $V_{\infty}=0$ and $\phi_{\infty}=+\infty$, then
$H_{\infty}=\lim_{t\rightarrow\sup\mathbb{I}}H(t)\leq0$.
\end{lemma}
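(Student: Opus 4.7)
The plan is to argue by contradiction, exploiting the fact that equation \eqref{eq:4} makes $H(t)$ monotonically decreasing, so the limit $H_\infty\in[-\infty,H(t_0)]$ automatically exists. If $H_\infty=-\infty$ there is nothing to prove, so the real content is to rule out $H_\infty>0$. Assume therefore $H_\infty>0$; then $H(t)\geq H_\infty>0$ throughout the maximal interval $\mathbb{I}$.

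First I would verify that $\sup\mathbb{I}=+\infty$ under this assumption. Since $V_\infty=0$ and, by Assumption \ref{as:1}\eqref{itm:6}, $V$ is strictly increasing to $0$ for $\phi\geq\phi_M$, the quantity $V(\phi(t))$ is bounded above and is bounded below on any finite portion of the trajectory with $\phi(t)\geq\phi_M$; together with the assumption that $H$ is bounded, the constraint \eqref{eq:5} then bounds $y^2$ and $\rho$, precluding a blow-up in finite time. So the trajectory is defined for all $t\geq t_0$.

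Next, the strict monotonicity of $V$ near $+\infty$ combined with $V_\infty=0$ gives $V(\phi(t))<0$ for all $t$ large enough. Plugging this into \eqref{eq:5} yields
\[
\tfrac{1}{2}y(t)^{2}+\rho(t)\;=\;3H(t)^{2}-V(\phi(t))\;\geq\;3H_\infty^{2}\;>\;0
\]
eventually. Since $0<\gamma<4/3<2$ we have $y^{2}+\gamma\rho\geq\gamma\bigl(\tfrac{1}{2}y^{2}+\rho\bigr)$, and therefore \eqref{eq:4} gives
\[
\dot H(t)\;=\;-\tfrac{1}{2}\bigl(y^{2}+\gamma\rho\bigr)\;\leq\;-\tfrac{3\gamma}{2}\,H_\infty^{2}\;<\;0
\]
for all sufficiently large $t$. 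Integrating, $H(t)\to-\infty$, which contradicts $H(t)\to H_\infty>0$. Hence $H_\infty\leq 0$.

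I expect the only delicate point is the verification that the trajectory exists for all future time when $H_\infty>0$ is assumed; everything else is a direct exploitation of \eqref{eq:5}, \eqref{eq:4} and the fact that $V(\phi(t))$ approaches $0$ from below (which is forced by the monotonicity part of Assumption \ref{as:1}\eqref{itm:6}). No use of the $f$/$g$ diffeomorphisms of Assumption \ref{as:1} or of Lemmas \ref{lem:1}--\ref{lem:2} should be needed.
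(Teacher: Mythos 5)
Your proof is correct, and its second half genuinely diverges from the paper's. Both arguments assume $H_{\infty}>0$ for contradiction and both rest on the same key observation: the constraint \eqref{eq:5} together with $V(\phi(t))\to V_{\infty}=0$ forces $\tfrac{1}{2}y^{2}+\rho$ to stay near $3H_{\infty}^{2}>0$, so $\dot H$ is eventually bounded away from zero and $H$ would negatively diverge on an infinite time interval. The paper uses this to conclude that $\sup\mathbb{I}$ must be \emph{finite}, and then needs a second, separate contradiction: a Cauchy--Schwarz estimate showing that $\phi$ stays bounded on a finite time interval (because $\int y^{2}\leq -2\int\dot H$ is controlled by the bounded variation of $H$), contradicting $\phi_{\infty}=+\infty$. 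You instead rule out finite $\sup\mathbb{I}$ at the outset by an a priori boundedness and continuation argument --- $H$ is trapped in $[H_{\infty},H(t_{0})]$, $V(\phi(t))\in[V(\phi_{M}),0)$ once $\phi\geq\phi_{M}$, hence \eqref{eq:5} bounds $y$ and $\rho$ (using $\rho\geq0$, which is preserved by \eqref{eq:3}) and the solution extends to all future times --- after which the single integral contradiction suffices. Your route trades the Cauchy--Schwarz step for the ODE extension step; it is arguably more economical, and correctly avoids Lemmas \ref{lem:1}--\ref{lem:2} and the $f$, $g$ diffeomorphisms, just as the paper does. Two cosmetic remarks: the restriction $0<\gamma<4/3$ you invoke is not needed and is not a standing hypothesis here --- the inequality $y^{2}+\gamma\rho\geq\gamma\bigl(\tfrac{1}{2}y^{2}+\rho\bigr)$ holds for any $\gamma\in(0,2]$, the usual perfect--fluid range; and your observation that $V(\phi(t))<0$ eventually (so that $3H^{2}-V\geq 3H_{\infty}^{2}$ as an inequality rather than a limit) is a small strengthening of what the paper writes, but both versions deliver the required positive lower bound on $-\dot H$.
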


\begin{proof}
By contradiction, suppose $H_{\infty}>0$. Then $\tfrac{1}{2}y^{2}%
+\rho\rightarrow3H_{\infty}^{2}$ and therefore $\sup\mathbb{I}\in\mathbb{R}$,
otherwise it would be
\[
H(t)-H(0)=-\frac{1}{2}\int_{t_{0}}^{t}y(s)^{2}+\gamma\rho(s)\,\mathrm{d}%
s\rightarrow-\infty,
\]
as $t\rightarrow\infty$, a contradiction. By the Cauchy-Schwarz inequality we
obtain,
\begin{align*}
(\phi(t)-\phi(t_{0}))^{2}  &  =\left(  \int_{t_{0}}^{t}\dot{\phi
}(s)\,\mathrm{d}s\right)  ^{2}\leq(t-t_{0})\int_{t_{0}}^{t}y(s)^{2}%
\,\mathrm{d}s\\
&  \leq(t-t_{0})\int_{t_{0}}^{t}-2\dot{H}(s)\,\mathrm{d}s=2(t-t_{0}%
)(H(t_{0})-H(t)),
\end{align*}
that converges to the finite value $2(\sup\mathbb{I}-t_{0})(H(t_{0}%
)-H_{\infty})\in\mathbb{R}$, hence $\phi(t)$ is bounded, which is a
contradiction. Therefore, $H_{\infty}\leq0$.
\end{proof}

\begin{proposition}
\label{thm:propV0} Suppose that $\phi_{\infty}=+\infty\ $and $V_{\infty}=0$.
If \eqref{eq:cond} does not hold, then \eqref{eq:rec} generically holds.
Otherwise, i.e. if \eqref{eq:cond} holds, either \eqref{eq:rec} generically
holds or the solution expands forever, with $\phi(t)\rightarrow+\infty$ and
$y(t),\rho(t)$ and $H(t)$ infinitesimal as $t\rightarrow+\infty$.
\end{proposition}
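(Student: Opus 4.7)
The plan is to extend the Hubble--normalised analysis of Proposition~\ref{thm:prop2} to the ``positive--infinity'' critical point, using the diffeomorphism $g$ from Assumption~\ref{as:1}(\ref{itm:6}). By Lemma~\ref{thm:lem} we have $H_\infty\leq 0$; if $H$ diverges negatively in finite time we are done, so I assume the contrary. Setting $w=y/H$, $z=\sqrt{\rho}/H$ and $s=g(\phi)$, and reparametrising time by $\mathrm{d}\tau/\mathrm{d}t=-H$ (with an obvious sign adjustment should $H$ stay positive), one obtains essentially the same reduced system \eqref{eq:A2}--\eqref{eq:A4} but with $u(g^{-1}(0))=\lambda_+$ and $\alpha=\alpha_+:=\tfrac{4-3\gamma}{2}Q_+$. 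The qualitative novelty is that $V(\phi(t))$ is now eventually non--positive, so the relevant dynamics lie on or outside the ellipsoid $\tfrac12 w^2+z^2=3$ rather than strictly inside.

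The candidate $\omega$--limit points on $\{s=0\}$ are again the four critical points $\mathcal{A},\mathcal{B},\mathcal{C},\mathcal{D}$ (with $\lambda_+,\alpha_+$ in place of $\lambda_-,\alpha_-$). Of these, $\mathcal{D}$ is the familiar matter--scalar-field scaling fixed point, corresponding asymptotically to $\phi\to+\infty$, $H\to 0^+$, and $y,\rho\to 0$ with $y/H$ and $\sqrt{\rho}/H$ finite, i.e.\ precisely the ``ever expanding'' scenario appearing in the statement. I would then linearise the reduced system at $\mathcal{D}$ restricted to $\{s=0\}$ and verify that the three inequalities in \eqref{eq:cond} are equivalent to the sink property for $\mathcal{D}$: the first two are reality/admissibility conditions (in particular $Q_+<\sqrt{6}(2-\gamma)/(4-3\gamma)$ is the same as $2\alpha_+^2<3(2-\gamma)^2$, the bifurcation condition that transfers stability from the matter--dominated point $\mathcal{C}$), while the third, $\lambda_+<-\alpha_+-3\gamma(2-\gamma)/(2\alpha_+)$, is equivalent to the pair of non-trivial $(w,z)$--eigenvalues at $\mathcal{D}$ having negative real parts.

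The $s$--direction requires a more delicate treatment: its linearisation at $\{s=0\}$ involves $\lim_{\phi\to+\infty} g''(\phi)/g'(\phi)$, which by Assumption~\ref{as:1}(\ref{itm:5})(\ref{itm:d}) is a finite real number but may vanish (giving a non-hyperbolic centre direction). I would control it with a centre-manifold / normal-form reduction in the spirit of \cite{foster, giam}, using the asymptotic conditions (\ref{itm:a})--(\ref{itm:d}) and the vanishing of $u'/g'$. Once this is in place, the conclusion follows: if \eqref{eq:cond} holds, $\mathcal{D}$ is a sink attracting an open set of trajectories, producing the ever-expanding scaling solutions stated; if \eqref{eq:cond} fails, $\mathcal{D}$ is not a sink (or lies outside the admissible region), so no generic trajectory can accumulate on $\{s=0\}$ in the ``plateau'' regime, and it must eventually leave a neighbourhood of $s=0$, whence $V(\phi(t))$ is bounded above by a strictly negative constant and Lemma~\ref{lem:2} yields \eqref{eq:rec}. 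The same argument covers orbits outside the basin of $\mathcal{D}$ when \eqref{eq:cond} does hold.

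The principal obstacle is precisely this centre-manifold analysis, because the $s$--direction is non-hyperbolic at every candidate critical point. One has to rule out slow tangential drifts along $\{s=0\}$ that could keep a trajectory indefinitely close to the flat plateau of $V$ without it lying in the basin of $\mathcal{D}$. This is where the full strength of Assumption~\ref{as:1}(\ref{itm:5})--(\ref{itm:6}), in particular the ratio condition (\ref{itm:d}), becomes essential.
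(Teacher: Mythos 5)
Your overall strategy coincides with the paper's: invoke Lemma~\ref{thm:lem} to reduce to $H_\infty=0$, pass to the Hubble--normalised variables $(w,z,s=g(\phi))$, and match \eqref{eq:cond} to the sink property of one of the critical points $\mathcal{A},\dots,\mathcal{D}$ on $\{s=0\}$. However, you identify the wrong sink. The attractor responsible for the ever--expanding branch is $\mathcal{C}=\bigl(\tfrac{2\alpha}{2-\gamma},\tfrac{\sqrt{-2\alpha^2+3(2-\gamma)^2}}{2-\gamma}\bigr)$, not the scaling point $\mathcal{D}$: the second inequality in \eqref{eq:cond} is exactly $2\alpha_+^2<3(2-\gamma)^2$, i.e.\ the condition for the $z$--coordinate of $\mathcal{C}$ to be real (it is not a ``bifurcation away from $\mathcal{C}$''), and a direct computation of the Jacobian of \eqref{eq:A6}--\eqref{eq:A7} at $\mathcal{C}$ gives $\det J=z_{\mathcal C}^2\bigl[-\alpha(w_{\mathcal C}+\lambda)-\tfrac{\gamma(2-\gamma)}{2}z_{\mathcal C}^2\bigr]$, and $\det J>0$ is precisely the third inequality $\lambda_+<-\alpha_+-\tfrac{3\gamma(2-\gamma)}{2\alpha_+}$ (the trace is then automatically negative). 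So your linearisation step, as described, would not reproduce \eqref{eq:cond}.

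The more serious gap is your treatment of trajectories outside the basin of the sink. You argue that such a trajectory ``must eventually leave a neighbourhood of $s=0$, whence $V(\phi(t))$ is bounded above by a strictly negative constant.'' This contradicts the standing hypotheses of the proposition: $\phi_\infty=+\infty$ forces $s=g(\phi(t))\to 0$ and $V_\infty=0$ forces $V(\phi(t))\to 0^-$, so no strictly negative upper bound for $V$ exists, and Lemma~\ref{lem:2}(\,$\imath$\,) cannot be applied this way. What actually has to be excluded is an ever--expanding solution ($H>0$, $H_\infty=0$) along which $\tfrac12 w^2+z^2\to+\infty$, i.e.\ $|V|/H^2\to\infty$ while both $V\to 0^-$ and $H\to 0^+$. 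The paper does this by introducing $\widetilde{x}=\sqrt{|V|}/H$ and showing that the asymptotics $\mathrm{d}\widetilde{x}/\mathrm{d}\tau\approx A\widetilde{x}^3$ force $t(\tau)$ to converge as $\tau$ approaches a finite value, contradicting $\sup\mathbb{I}=+\infty$; only then does one conclude that $H$ becomes negative, after which Lemma~\ref{lem:2} applies with $\bar V=0$ (case $(\imath\imath)$, since $V\le 0$ and $H(t_1)<0$). This blow-up argument is the core of the proof and is absent from your proposal; the centre--manifold issue in the $s$--direction that you flag, while legitimate, is not where the main difficulty lies.
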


\begin{proof}
By the preceding Lemma, $H_{\infty}\leq0$. If $H_{\infty}$ is strictly
negative then the results follows from Lemma \ref{lem:2}. Suppose now it is
zero; this means that the solution expands forever and a normalized variables
scheme can be used to study the critical point\ \textquotedblleft at
infinity\textquotedblright. We use variables $(\phi,x,w,z)$ as in Proposition
\ref{thm:prop2}, which are functions of a new time $\tau$ coordinate defined
by $\mathrm{d}\tau/\mathrm{d}t=H$\textrm{,} (note that unlike the case treated
in Proposition \ref{thm:prop2}, now $H>0$). Using the function $g$ defined in
Assumption \ref{as:1} and arguing as in the proof of Proposition
\ref{thm:prop2}, we arrive at the following system,
\begin{align}
\frac{\mathrm{d}w}{\mathrm{d}\tau}  &  =\left(  \frac{1}{2}w^{2}-3\right)
(w+u(g^{-1}(s)))+z^{2}\left(  \frac{\gamma}{2}w+\alpha(g^{-1}(s))+u(g^{-1}%
(s))\right)  ,\label{eq:A6}\\
\frac{\mathrm{d}z}{\mathrm{d}\tau}  &  =\frac{1}{2}z\left[  w^{2}%
-\alpha(g^{-1}(s))w+\gamma(z^{2}-3)\right]  ,\label{eq:A7}\\
\frac{\mathrm{d}s}{\mathrm{d}\tau}  &  =wg^{\prime}(g^{-1}(s)). \label{eq:A8}%
\end{align}
We are interested in solutions such that $s\rightarrow0$ and $\frac{1}{2}%
w^{2}+z^{2}>3$, with $w,z\geq0$ eventually. Therefore, the $(w,z)$%
--coordinates of the critical points that are admissible candidates to be
$\omega$--limit points are, (setting $\lambda=\lambda_{+}$ and $\alpha
=\alpha(g^{-1}(0))=\tfrac{4-3\gamma}{2}Q_{+}$)
\begin{align*}
\mathcal{A}  &  =\left(  \sqrt{6},0\right)  ,\,\mathcal{B}=(-\lambda
,0),\,\mathcal{C}=\left(  \frac{2\alpha}{2-\gamma},\frac{\sqrt{-2\alpha
^{2}+3(2-\gamma)^{2}}}{2-\gamma}\right)  ,\,\\
\mathcal{D}  &  =\left(  -\frac{3\gamma}{\alpha+\lambda},-\frac{\sqrt
{3(-3\gamma+\alpha\lambda+\lambda^{2})}}{\alpha+\lambda}\right)  .
\end{align*}
The analysis of these critical points reveals that the only sink can be
$\mathcal{C}$, and this happens precisely when \eqref{eq:cond} holds. In this
case, we obtain ever expanding solutions such that $H_{\infty}=0$, and
consequently, both $y$ and $\rho$ tend to zero; since the solution is defined
for $\tau\rightarrow+\infty$, and recalling that $t$ is an increasing function
of $\tau$, we get $\sup\mathbb{I}=+\infty$, i.e. also the corresponding
solution to \eqref{eq:1}--\eqref{eq:5} is defined for $t\rightarrow+\infty$.

If the solution does not start into the basin of attraction of $\mathcal{C}$,
it is unbounded, thus, $\frac{1}{2}w^{2}+z^{2}\rightarrow+\infty$. Suppose by
contradiction that they also correspond to ever expanding cosmologies with
$H_{\infty}=0$. Then $\sup\mathbb{I}=+\infty$. Set
\[
\widetilde{x}=\frac{\sqrt{|V|}}{H},
\]
and observe that
\[
\frac{\mathrm{d}\widetilde{x}}{\mathrm{d}\tau}=\frac{1}{2}\widetilde
{x}(u(g^{-1}(s))w+w^{2}+\gamma z^{2})\approx\frac{1}{2}\widetilde{x}%
(w^{2}+\gamma z^{2}),
\]
where the symbol $\approx$ is used to denote the dominant terms. Now,
\[
\widetilde{x}^{2}=\frac{1}{2}w^{2}+z^{2}-3\approx\frac{1}{2}w^{2}+z^{2}\approx
K(w^{2}+\gamma z^{2}),
\]
for some constant $K>0$, so $\mathrm{d}\widetilde{x}/\mathrm{d}\tau\approx
A\widetilde{x}^{3}$ for some $A>0$, which implies $\widetilde{x}(\tau
)\approx(a-b\tau)^{-1/2}$ for suitable $a,b>0$.

Then,
\[
\frac{1}{2}(w^{2}+\gamma z^{2})\approx\frac{\mathrm{d}\widetilde{x}%
}{\mathrm{d}\tau}\frac{1}{\widetilde{x}}\approx\frac{b}{2}(a-b\tau)^{-1},
\]
hence,
\[
\frac{\mathrm{d}H}{\mathrm{d}\tau}=-H\frac{1}{2}(w^{2}+\gamma z^{2}%
)\approx-H(\tau)\frac{b}{2}(a-b\tau)^{-1},
\]
from which $H(\tau)\approx H_{0}\sqrt{a-b\tau}$. This implies that
\[
t=\int_{\tau_{0}}^{\tau}\frac{1}{H(\sigma)}\mathrm{d}\sigma\approx\int
_{\tau_{0}}^{\tau}\frac{1}{H_{0}\sqrt{a-b\sigma}}\,\mathrm{d}\sigma,
\]
which converges as $\tau\rightarrow a/b$. This means that $\sup\mathbb{I}%
\in\mathbb{R}$, that is a contradiction. Therefore, $H(t)<0$ eventually. Since
$V(\phi(t))<0$ eventually, the conclusion follows from Lemma \ref{lem:2}.
\end{proof}

\begin{example}
\label{rem:example} To illustrate the situation depicted in Proposition
\ref{thm:propV0}, let us consider the double exponential potential,
\[
V(\phi)=e^{\lambda_{-}\phi}-e^{\lambda_{+}\phi},
\]
that falls into case \ref{itm:A}, provided that $\lambda_{-}<\lambda_{+}%
<0$.\begin{figure}[th]
\begin{center}
\subfigure[\ ]{            \label{fig:first}
\includegraphics[width=0.4\textwidth]{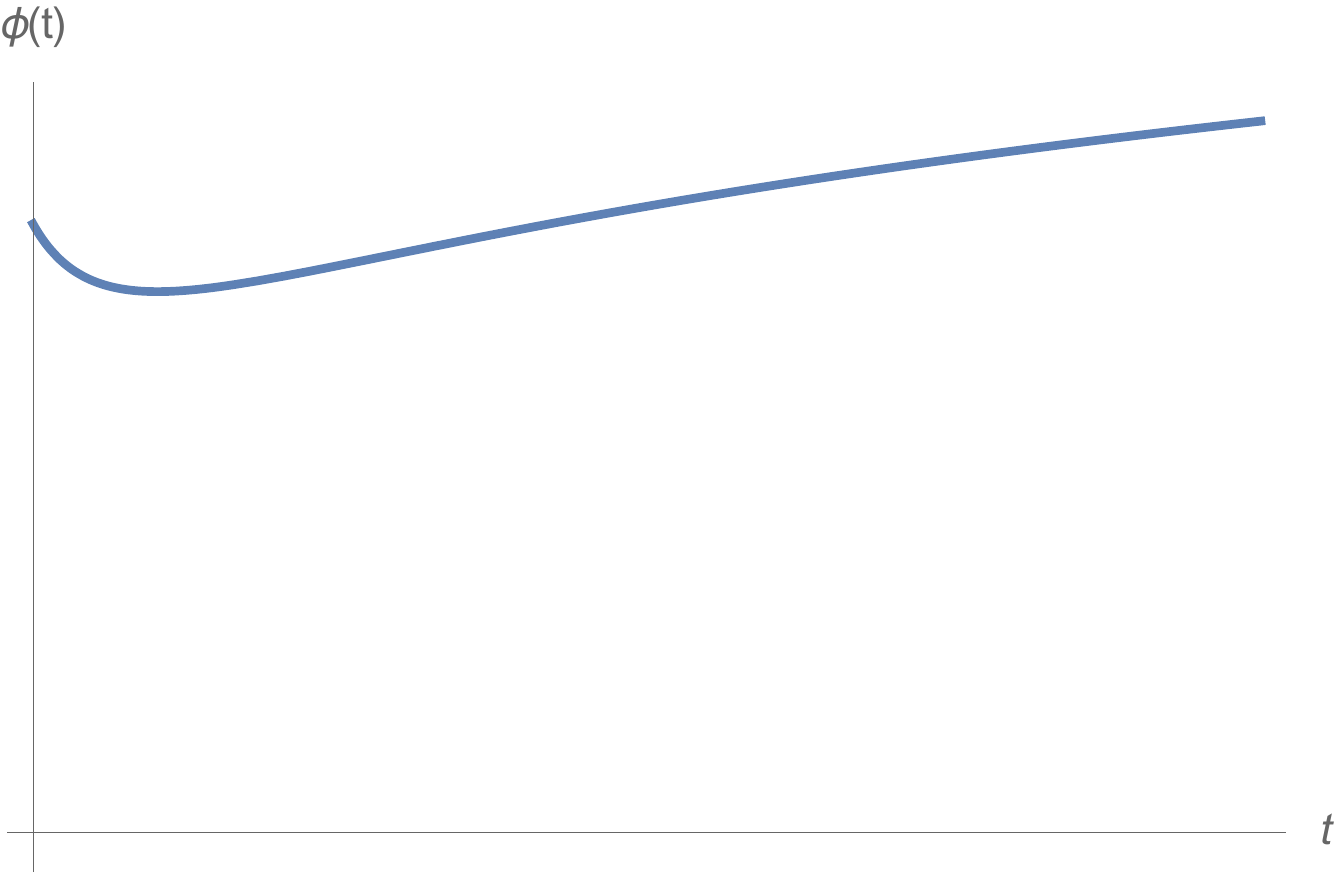}
} \subfigure[\ ]{           \label{fig:second}
\includegraphics[width=0.4\textwidth]{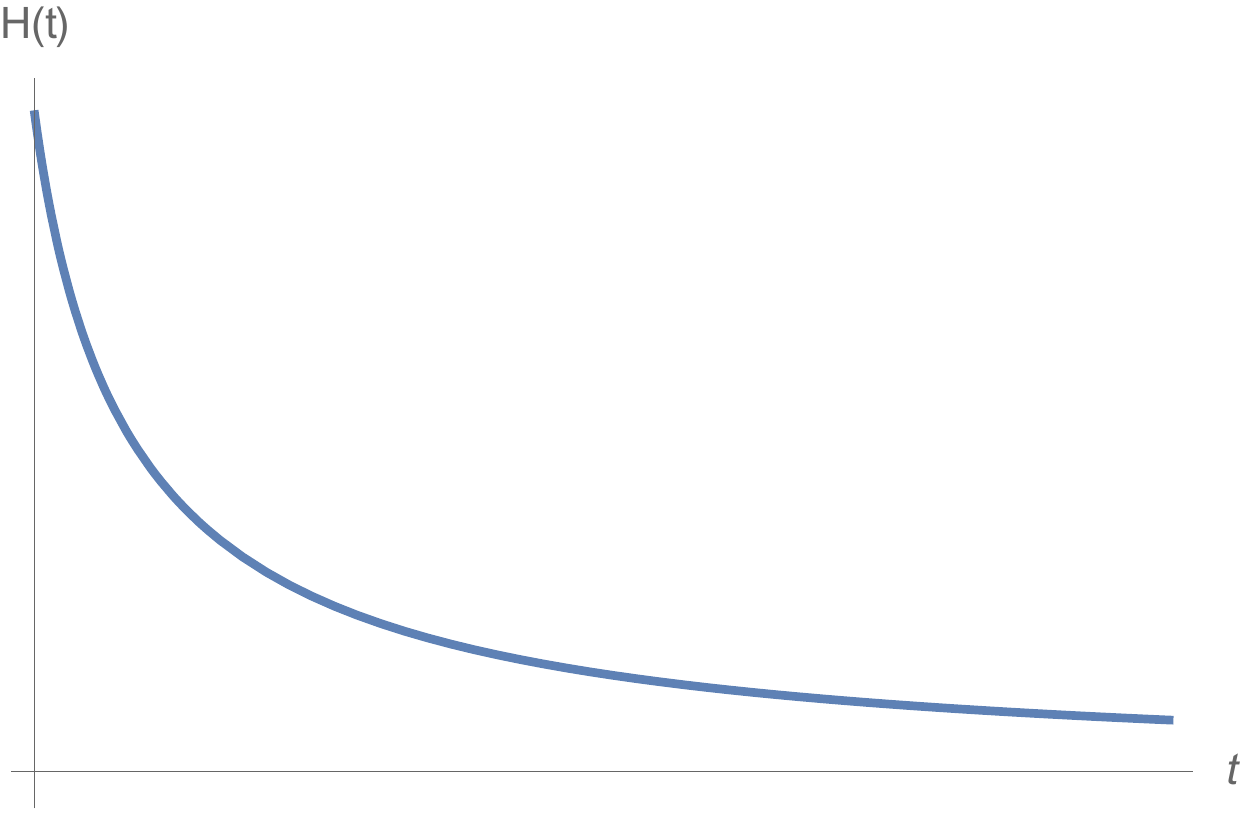}
}\newline\subfigure[\ ]{            \label{fig:third}
\includegraphics[width=0.4\textwidth]{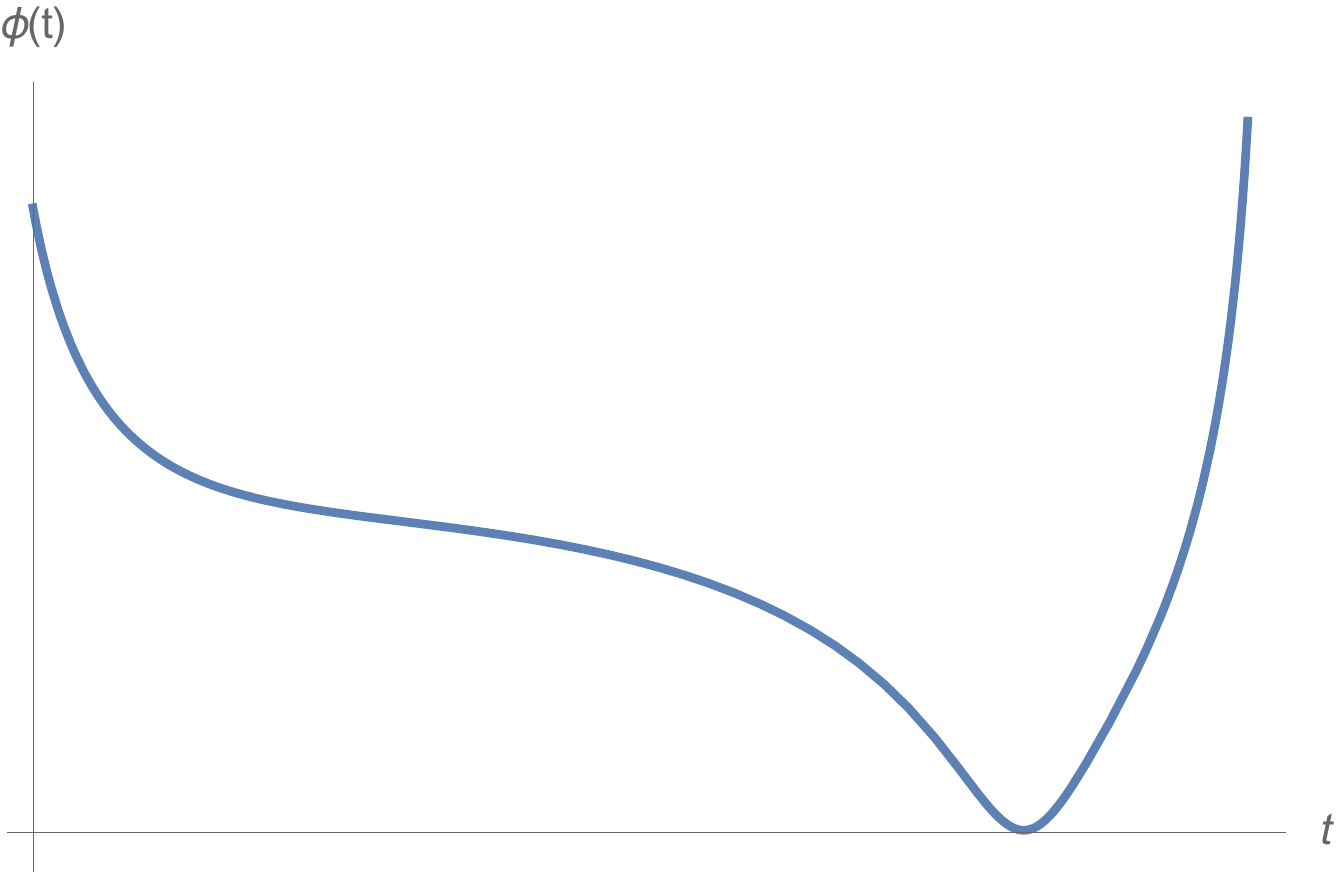}
} \subfigure[\ ]{            \label{fig:fourth}
\includegraphics[width=0.4\textwidth]{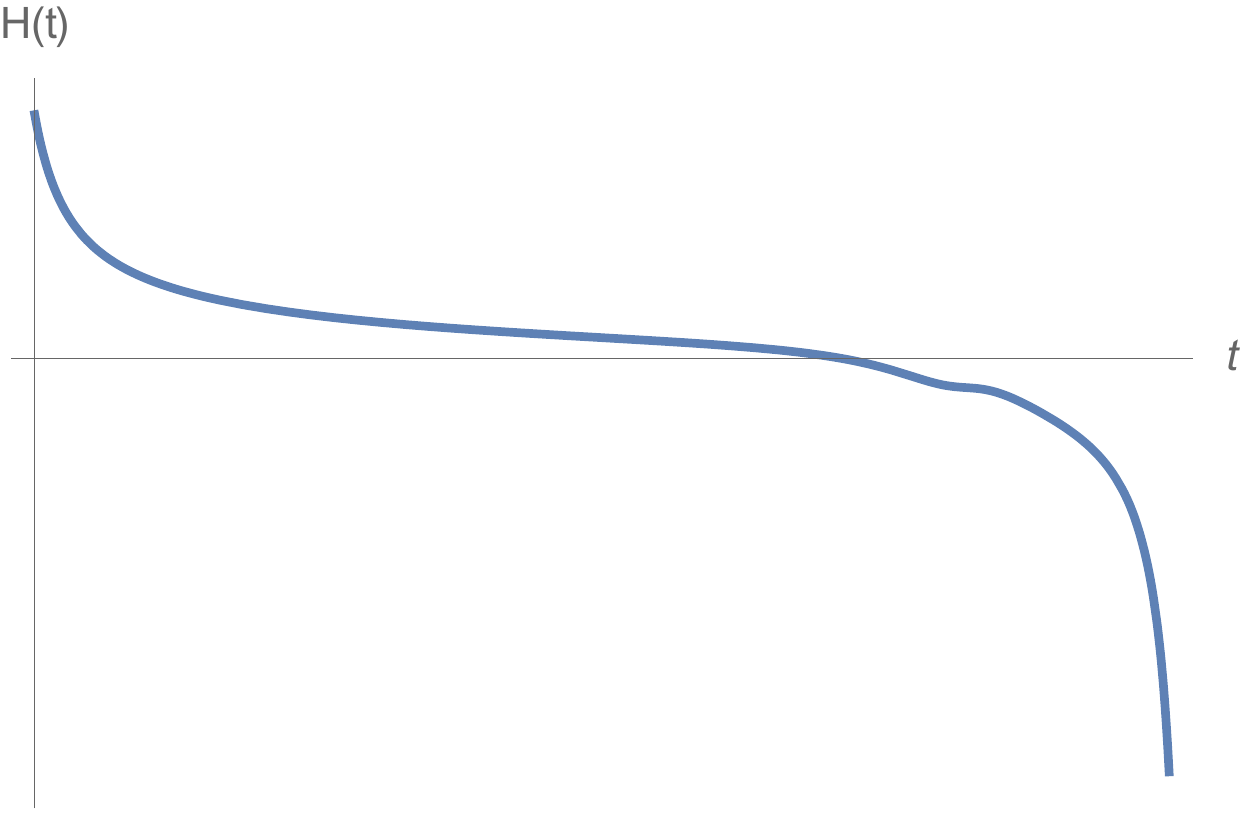}
}
\end{center}
\caption{The cases studied in Proposition \ref{thm:propV0}. In the top figures
\ref{fig:first}--\ref{fig:second}, the scalar field positively diverges and
the Universe expands forever. I the bottom figures \ref{fig:third}%
--\ref{fig:fourth}, the scalar field positively diverges again, but in a
finite time, resulting in recollapse and development of a singularity.}%
\label{fig:subfigures}%
\end{figure}

Choose parameters, for example, $\lambda_{+}=-4$, $\lambda_{-}=-5$, and
$Q(\phi)=1$, a constant. Then condition \eqref{eq:cond} is satisfied for a non
empty set of the admissible values of $\gamma$, for instance $\gamma=1$,
(dust). With this set of parameters, both expansion and recollapse may take
place, depending on the initial conditions.

With initial conditions for instance, $H(0)=1$, $\phi(0)=2$ and $y(0)=-1$ (the
initial value $\rho(0)$ is not arbitrary, but is determined by \eqref{eq:5}),
the scalar field positively diverges in an infinite time and the Hubble
function remains always positive, tending asymptotically to zero; therefore
the Universe expands forever, Figures \ref{fig:first}--\ref{fig:second}.

Simply changing the initial conditions, for instance $y(0)=-2$, then the
scalar positively diverges again, but now in a finite amount of time. Indeed,
$H(t)$ changes sign and once it becomes negative, the solution is forced to
recollapse and develop a singularity, Figures \ref{fig:third}%
--\ref{fig:fourth}.
\end{example}

\subsection{Case $\phi_{\infty}$ does not exist}

In this subsection we study the case when $\phi(t)$ neither converges nor diverges.

\begin{proposition}
\label{thm:phinotexists} If $\phi_{\infty}$ does not exist, then
\eqref{eq:rec} generically holds.
\end{proposition}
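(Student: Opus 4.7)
The proof strategy leverages the monotonic decrease of $H(t)$ guaranteed by \eqref{eq:4}, which ensures the existence of $H_\infty:=\lim_{t\to \sup\mathbb{I}} H(t)\in[-\infty, H(t_0))$. Write $\phi^{*}:=\liminf_{t\to\sup\mathbb{I}}\phi(t)$ and $\phi^{**}:=\limsup_{t\to\sup\mathbb{I}}\phi(t)$; by hypothesis $\phi^{*}<\phi^{**}$. My plan is to split into cases according to whether $\phi$ is bounded, and in each sub-case either apply Lemma \ref{lem:2} to force finite-time recollapse, invoke Lemma \ref{lem:1} to relegate the behaviour to a zero-measured set, or reduce to one of the previous propositions via the expansion-normalized technique.

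If $\phi^{*}>-\infty$ and $\phi^{**}<+\infty$, continuity of $V$ gives $\bar{V}:=\sup_{t}V(\phi(t))<+\infty$. A further dichotomy on $H_\infty$ closes this case. When $H_\infty>-\infty$, the constraint \eqref{eq:5} forces $y$ and $\rho$ to remain bounded, so the full trajectory is bounded and Lemma \ref{lem:1} yields the non-generic alternative. When $H_\infty=-\infty$, eventually $H(t)<-\sqrt{\bar{V}/3}$, and Lemma \ref{lem:2}(ii) delivers \eqref{eq:rec}.

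The remaining possibilities involve $\phi$ unbounded. Should $\phi^{*}=-\infty$, Assumption \ref{as:1}(1) ensures $V(\phi(t_n))\to+\infty$ along a suitable sequence $t_n$; the constraint \eqref{eq:5} then compels $|H(t_n)|\to+\infty$, and monotonicity upgrades this to $H\to-\infty$. Recollapse then follows from Remark \ref{rem:Vdiv}, which extends the expansion-normalized argument of Proposition \ref{thm:prop2} to situations where $V(\phi(t))$ fails to be bounded without $\phi$ actually converging. If instead $\phi^{*}>-\infty$ and $\phi^{**}=+\infty$, continuity together with $V_\infty\leq 0$ still yields an upper bound $\bar{V}$ for $V(\phi(\cdot))$ on $[\phi^{*},+\infty)$, so Lemma \ref{lem:2}(ii) handles the case $H_\infty=-\infty$; for the residual sub-case $H_\infty>-\infty$, I would reprise the analysis of the critical points at $\phi=+\infty$ carried out inside the proof of Proposition \ref{thm:propV0}, noting that the only mechanism to maintain $H_\infty>-\infty$ while $\phi^{**}=+\infty$ is capture by the sink $\mathcal{C}$, which however drags $\phi(t)$ monotonically to $+\infty$ and contradicts $\phi^{*}<+\infty$.

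The main obstacle I foresee lies precisely in this last sub-case ($\phi^{**}=+\infty$, $H_\infty>-\infty$): finiteness of $\int(y^{2}+\gamma\rho)\,\mathrm{d}t=2(H(t_0)-H_\infty)$ does not by itself prevent $\phi$ from oscillating with arbitrarily large upper excursions, so a delicate blend of the expansion-normalized equations \eqref{eq:A6}--\eqref{eq:A8} with Lemma \ref{lem:1} is needed to force the trajectory either into the basin of $\mathcal{C}$ (contradicting oscillation) or onto its stable manifold (non-generic), closing the argument.
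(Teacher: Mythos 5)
Your case decomposition is exhaustive, and three of its branches are handled essentially as in the paper: bounded $\phi$ via Lemma \ref{lem:1} and Lemma \ref{lem:2}; $\liminf\phi=-\infty$ via the constraint and Remark \ref{rem:Vdiv}; and $\limsup\phi=+\infty$ with $H_\infty=-\infty$ via Lemma \ref{lem:2}. The genuine gap is exactly the sub-case you flag as an ``obstacle'': $\liminf\phi\in\mathbb{R}$, $\limsup\phi=+\infty$, $H_\infty\in\mathbb{R}$. Your proposed fix --- reprising the critical-point analysis of Proposition \ref{thm:propV0} --- cannot work there, because that analysis classifies $\omega$--limit points on $\{s=0\}$ and therefore presupposes $\phi(t)\to+\infty$; under the standing hypothesis that $\phi_\infty$ does not exist, the trajectory never settles near the point at infinity, so neither capture by the sink $\mathcal{C}$ nor membership in a stable manifold is a meaningful alternative, and Lemma \ref{lem:1} is also unavailable since the trajectory is unbounded in $\phi$. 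Showing that the solution is \emph{not} attracted by $\mathcal{C}$ tells you nothing about what it does instead.

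The paper closes this case by an elementary comparison argument that your proposal lacks. Since $\phi$ oscillates with $\limsup\phi=+\infty$, fix an interval $[\phi_a,\phi_b]$ to the right of $\phi_0$ on which $V\le\bar V<0$, and extract successive transit times $t_n<s_n$ with $\phi(t_n)=\phi_a$, $\phi(s_n)=\phi_b$ and $\phi(t)\in[\phi_a,\phi_b]$ for $t\in[t_n,s_n]$. Cauchy--Schwarz together with \eqref{eq:4} gives $s_n-t_n\ge(\phi_b-\phi_a)^2/\bigl(2(H(t_n)-H(s_n))\bigr)$, which diverges if $H_\infty\in\mathbb{R}$. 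On the other hand, on $[t_n,s_n]$ the constraint \eqref{eq:5} combined with \eqref{eq:4} yields the Riccati inequality $\dot H\le\frac{\gamma}{2}(-3H^2+\bar V)$ with $\bar V<0$, whose comparison solutions with the bounded initial data $H(t_n)$ blow down to $-\infty$ in a time uniformly bounded in $n$. These two facts are incompatible, so $H_\infty=-\infty$ outright; finite-time divergence then follows from Lemma \ref{lem:2} as in your other branches. Without this comparison step (or an equivalent substitute), your argument does not go through.
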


\begin{proof}
First we claim that
\begin{equation}
H_{\infty}=\lim_{t\rightarrow\sup\mathbb{I}}H(t)=-\infty, \label{eq:hinfty}%
\end{equation}
generically holds, by considering the following subcases.

\begin{enumerate}
\item Suppose $\liminf_{t\rightarrow\sup\mathbb{I}}V(\phi(t))\geq0$. If by
contradiction, $H(t)$ was bounded, then from \eqref{eq:5} we could conclude
that $y(t),\rho(t)$ were bounded too. Then, for the solution to be  generic
(recall again Lemma \ref{lem:1}), $\phi(t)$ should be unbounded. But since
$V(\phi(t))$ must be eventually non negative, this would imply that
$\limsup_{t\rightarrow\sup\mathbb{I}}V(\phi(t))=+\infty$. Then a sequence
$t_{n}\rightarrow\sup\mathbb{I}$ exists, such that $H(t_{n})^{2}%
\rightarrow+\infty$, which means that $H(t_{n})\rightarrow-\infty$, which is a
contradiction. Thus, $H(t)$ cannot be bounded and therefore \eqref{eq:hinfty}
must hold.

\item Suppose $\liminf_{t\rightarrow\sup\mathbb{I}}V(\phi(t))<0$. In this case
there exist sequences $\{t_{n}\},\{s_{n}\}$, such that
\[
t_{n},s_{n}\rightarrow\sup\mathbb{I},\quad t_{n}<s_{n}<t_{n+1},
\]
with $V(\phi\left(  t_{n}\right)  ),V(\phi\left(  s_{n}\right)  )<0$ and
$\phi(t)$ lies between $\phi\left(  t_{n}\right)  $ and $\phi\left(
s_{n}\right)  $, $\forall t\in\lbrack t_{n},s_{n}]$. Using Cauchy-Schwarz
inequality as in the proof of Lemma \ref{thm:lem} we get
\[
(\phi\left(  t_{n}\right)  -\phi\left(  s_{n}\right)  )^{2}\leq(s_{n}%
-t_{n})\int_{t_{n}}^{s_{n}}-2\dot{H}(s)\,\mathrm{d}s=2(s_{n}-t_{n}%
)(H(t_{n})-H(s_{n})),
\]
and therefore
\begin{equation}
s_{n}-t_{n}\geq\frac{(\phi\left(  t_{n}\right)  -\phi\left(  s_{n}\right)
)^{2}}{2(H(t_{n})-H(s_{n}))}. \label{eq:deltat}%
\end{equation}
Now, if by contradiction $H_{\infty}\in\mathbb{R}$ then \eqref{eq:deltat}
would imply that $s_{n}-t_{n}\rightarrow+\infty$ and as a consequence
$\sup\mathbb{I}=+\infty$. Moreover comparison theorems in ODE would say that
$H(t)\leq z(t)$ in $[t_{n},s_{n}]$, where $z(t)$ solves the Cauchy problem
\[
\dot{z}(t)=\frac{\gamma}{2}(-3z(t)^{2}+\bar{V}),\qquad z(t_{n})=H(t_{n}),
\]
and $\bar{V}$ is a (negative) constant such that $V(\phi)<\bar{V}$, for every
$\phi$ between $\phi_{t}$ and $\phi_{s}$. Now, observe that the solution
$z(t)$ to the Cauchy problem above negatively diverges for some $t_{n}%
+\delta_{n}$, where $\delta_{n}$ is uniformly bounded with respect to $n$,
whereas $s_{n}-t_{n}\rightarrow+\infty$, and this is a contradiction. Hence
$H_{\infty}=-\infty$, i.e. \eqref{eq:hinfty} holds.
\end{enumerate}

In both cases (2a) and (2b) we have shown that \eqref{eq:hinfty} holds. Let us
prove that this happens in a finite amount of time. If $\liminf_{t\rightarrow
\sup\mathbb{I}}\phi(t)\in\mathbb{R}$ then there exists a $\bar{V}\in
\mathbb{R}$, such that $V(\phi(t))\leq\bar{V}$ eventually, and the result
follows from Lemma \ref{lem:2}.

If $\liminf_{t\rightarrow\sup\mathbb{I}}\phi(t)=-\infty$, i.e., $\limsup
_{t\rightarrow\sup\mathbb{I}}V(\phi(t))=+\infty$, then we can consider the
same system in normalized variables used in case (1c) before, see Remark
\ref{rem:Vdiv} after Proposition \ref{thm:prop2}.
\end{proof}

\section{Potentials of class \ref{itm:D}}

\label{sec4} The treatment of potentials of class \ref{itm:D} is similar to
the methods used in \cite[Theorem 1]{gmt}. In fact, we have the following result.

\begin{proposition}
\label{thm:D} \label{thm:prop1} Let $V(\phi)\in\mathcal{C}^{2}$ such that

\begin{enumerate}
\item \label{itm:p1} $\lim_{\phi\to\pm\infty} V(\phi)=-\infty$,

\item \label{itm:p2} $V$ has a unique nondegenerate critical point (that has
to be, in view of \eqref{itm:p1}, the global maximum),

\item \label{itm:p3} There exist $\lambda>0$ and $M>0$ such that, $|V^{\prime
}(\phi)|\leq-\lambda V(\phi)$, for all $\phi\,:\,|\phi|>M$.
\end{enumerate}

Then a solution to \eqref{eq:1}--\eqref{eq:5} generically recollapses in a
finite time, i.e. (\ref{eq:rec}) holds.
\end{proposition}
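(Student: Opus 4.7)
The plan is to carry out a case analysis on the asymptotic behavior of $\phi(t)$ in the spirit of Propositions \ref{thm:prop2}--\ref{thm:phinotexists}, adapted to the new class of potentials. Since $V$ tends to $-\infty$ at both $\pm\infty$ and is strictly monotonic on each side of its unique global maximum $\phi_0$ (monotonicity follows from \eqref{itm:p1} and \eqref{itm:p2}), much of the machinery of Section \ref{sec3} transfers with only minor modifications. Throughout, let $\phi_\infty$ denote $\lim_{t\to\sup\mathbb{I}}\phi(t)$ whenever the limit exists.

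First I would dispose of the easy cases. If $\phi_\infty=\pm\infty$, then by \eqref{itm:p1} we have $V(\phi(t))\to-\infty$, so eventually $V(\phi(t))\leq\bar V<0$ and Lemma \ref{lem:2}(i) yields \eqref{eq:rec} at once. If $\phi_\infty\in\mathbb{R}$, the proof of Proposition \ref{thm:propphifin} applies verbatim: boundedness of $\phi(t)$ combined with the constraint \eqref{eq:5} would force the whole solution to be bounded if $H$ were bounded, which Lemma \ref{lem:1} excludes generically; hence $H(t)\to-\infty$, and since $V(\phi(t))\leq V_{\max}$ automatically, Lemma \ref{lem:2}(ii) finishes the job.

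The delicate case is when $\phi_\infty$ does not exist. Here I would follow Proposition \ref{thm:phinotexists}. If $\phi(t)$ oscillates while remaining bounded, the argument just given still applies. Otherwise $\phi(t)$ makes unbounded excursions, and using \eqref{itm:p1} together with the strict monotonicity of $V$ outside the compact interval $[\phi_1,\phi_2]$ where $V\geq 0$, one selects sequences $t_n<s_n\to\sup\mathbb{I}$ such that $V(\phi(t))\leq\bar V<0$ throughout $[t_n,s_n]$ (for example, as successive return times to the level $\phi_2+1$ on the right, or $\phi_1-1$ on the left, across which $\phi$ makes a deep excursion into $\{V<\bar V\}$). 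The Cauchy--Schwarz estimate and the scalar ODE comparison $\dot H\leq\tfrac{\gamma}{2}(-3H^2+\bar V)$ exploited in case (2b) of Proposition \ref{thm:phinotexists} then simultaneously exclude $H_\infty\in\mathbb{R}$ and produce finite-time divergence $H\to-\infty$.

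The main obstacle is this oscillatory subcase. Hypothesis \eqref{itm:p3}, $|V'(\phi)|\leq-\lambda V(\phi)$ for $|\phi|>M$, is what keeps the argument under control: it plays the same role as the corresponding assumption in \cite[Theorem 1]{gmt}, preventing the driving force $-V'(\phi)$ in \eqref{eq:2} from growing faster than $|V(\phi)|$ as $\phi$ wanders off to infinity. Without such exponential-type control, the scalar field could in principle traverse the excursion intervals $[t_n,s_n]$ arbitrarily quickly, invalidating the comparison between the Cauchy--Schwarz lower bound on $s_n-t_n$ and the upper bound coming from the blow-up time of the comparison ODE. Verifying that the selection of the excursion times is compatible with the ODE comparison, under the sole assumptions \eqref{itm:p1}--\eqref{itm:p3}, is the technical heart of the proof.
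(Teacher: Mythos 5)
Your case analysis is sound and, filled in with the details you indicate, it does prove the proposition; but it is a genuinely different route from the paper's. The paper does not give a self-contained argument at all: it observes that the proof of Theorem 1 of \cite{gmt} carries over, and that the only delicate point there --- the possibility that $\phi$ escapes to $+\infty$ where $V\to 0^{+}$ and the universe expands forever, which is what forces the introduction of the critical value $\dot{\phi}_{crit}$ --- simply does not arise here because $V\to-\infty$ in both directions. Your reconstruction instead redoes the analysis via the $\phi_{\infty}$ dichotomy of Section \ref{sec3}, and it is in fact simpler than the corresponding analysis under Assumption \ref{as:1}: since $V\to-\infty$ at both ends, the cases $\phi_{\infty}=\pm\infty$ are dispatched by Lemma \ref{lem:2}$(\imath)$ alone, with no need for the normalized-variables study of the critical points at infinity (the analogue of Proposition \ref{thm:prop2}); and since $V\le V(\phi_{0})$ globally, once $H_{\infty}=-\infty$ is known, Lemma \ref{lem:2}$(\imath\imath)$ always upgrades the divergence to finite time. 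This is a legitimate and arguably more transparent proof.

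Where you go astray is in the final paragraph, in the role you assign to hypothesis \eqref{itm:p3}. The worry that the scalar field ``could traverse the excursion intervals arbitrarily quickly'' is already excluded, with no reference to $V'$, by the Cauchy--Schwarz estimate itself: under the contradiction hypothesis $H_{\infty}\in\mathbb{R}$ one has $\int_{t_{n}}^{s_{n}}y^{2}\,\mathrm{d}s\le 2(H(t_{n})-H(s_{n}))\to 0$, so a traversal of a fixed-length $\phi$-interval forces $s_{n}-t_{n}\to+\infty$; likewise the comparison $\dot{H}\le\tfrac{\gamma}{2}(-3H^{2}+\bar{V})$ uses only the constraint and $V\le\bar{V}$, never $V'$. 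So there is no ``technical heart'' left to verify there, and in fact hypothesis \eqref{itm:p3} is not used anywhere in your argument. Its actual role (in \cite{gmt}, and in the paper's intended transplant of that proof) is to keep $u=V'/V$ bounded as $\phi\to\pm\infty$, which is what makes the critical points at infinity of the normalized system well defined --- precisely the machinery your route bypasses. You should either state explicitly that your proof does not invoke \eqref{itm:p3}, or stop claiming it is essential to a step in which it plays no part.
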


The argument follows the same line of the proof used for left unbounded
potentials treated in \cite{gmt}. Indeed, in that paper a class of potentials
was considered such that hypotheses \eqref{itm:p1} and \eqref{itm:p3} hold
only for $\phi\rightarrow-\infty$, whereas $V(\phi)\rightarrow0^{+}$ as
$\phi\rightarrow+\infty$. A critical value $\dot{\phi}_{crit}$ is shown to
exist such that, if $y(t_{0})<\dot{\phi}_{crit}$, the scalar field is
eventually forced to take values to the left of the global maximum. The proof
of Theorem 1 in \cite{gmt}, which is the analogue of the above Proposition
\ref{thm:D}, relies precisely on these hypotheses, in such a way that one can
be sure that the scalar field does not positively diverge to the critical
point at infinity, a case examined in a previous paper \cite{tzmi}. In the
present case, potentials of class \ref{itm:D} do not have that complication,
diverging to $-\infty$ on both directions, and therefore, regardless of the
behavior of the scalar field, and recalling Lemma \ref{lem:1}, it can be
proved with exactly the same argument that, the solution recollapses almost
always and the Hubble function negatively diverges in a finite time.

\section{Discussion\label{end}}

With Proposition \ref{thm:prop1} we complete the analysis, carried on in Ref.
\cite{gmt}, of the class of potentials falling to minus infinity as
$\phi\rightarrow-\infty$, having a global positive maximum and going to zero
from above as $\phi\rightarrow+\infty$. In that study, assuming that the
growth of $|V(\phi)|$ to infinity is at most exponential, the corresponding
initially expanding Universes, eventually collapse in a finite time, up to a
set of initial data of measure zero. In the present paper we extend the
analysis to situations where the negative branch of the potential function may
possibly not diverge, whereas the positive branch diverges to $+\infty$. We
have investigated the qualitative behavior of the Hubble function, examining
all possible cases for the asymptotic behavior of the scalar field. We have
found that the recollapse and the formation of a future singularity always
take place in a \textit{generic} way, i.e. stable with respect to
perturbations of the initial data of the system. Moreover, recollapse is the
only generical situation allowed, except in case the potential goes to zero
from below as $\phi\rightarrow+\infty$ and \eqref{eq:cond} holds; in this case
there also exists generical choices of initial data that do not lead to
recollapse, producing an ever--expanding cosmology where the scalar field
positively diverges. Our conclusions are valid for scalar fields coupled to
matter, as well as for uncoupled models studied so far in the literature.

Cosmology with negative potentials is the basis of the cyclic Universes in the
context of the ekpyrotic scenario. Our results may be helpful in building
solid models of cyclic cosmologies and therefore avoid the fragility of this
scenario with respect to the unknown physics at the singularity.

\end{document}